\let\Ginclude@graphics\@org@Ginclude@graphics
\newcommand{\bc}{\begin{center}}
	\newcommand{\ec}{\end{center}}
\newcommand{\bdm}{\begin{displaymath}}
	\newcommand{\edm}{\end{displaymath}}
\newcommand{\beq}{\begin{equation}}
	\newcommand{\eeq}{\end{equation}}
\newcommand{\bfl}{\begin{flushleft}}
	\newcommand{\efl}{\end{flushleft}}
\newcommand{\bt}{\begin{tabbing}}
	\newcommand{\et}{\end{tabbing}}
\newcommand{\beqn}{\begin{eqnarray}}
	\newcommand{\eeqn}{\end{eqnarray}}
\newcommand{\beqs}{\begin{align*}} 
	\newcommand{\eeqs}{\end{align*}}  
\begin{document}
	
	\title{Optimal Compression for Minimizing Classification Error Probability: an Information-Theoretic Approach}
	
	\author{\name Jingchao Gao \email jingchao-gao@uiowa.edu\\
		\addr Department of Mathematics\\
		The University of Iowa\\
		Iowa City, IA 52242, USA
		\AND
		\name Ao Tang \email atang@cornell.edu \\
		\addr Electrical and Computer Engineering\\
		Cornell University\\
		Ithaca, NY 14850, USA
        \AND 
        \name Weiyu Xu \email weiyu-xu@uiowa.edu\\
        \addr Electrical and Computer Engineering\\ 
        The University of Iowa\\ 
        Iowa City, IA 52242, USA}
	
\editor{}
	
	\maketitle
\begin{abstract}
We formulate the problem of performing optimal data compression under the constraints that compressed data can be used for accurate classification in machine learning. We show that this translates to a problem of minimizing the mutual information between data and its compressed version under the constraint on error probability of classification is small when using the compressed data for machine learning. We then provide analytical and computational methods to characterize the optimal trade-off between data compression and classification error probability. First, we provide an analytical characterization for the optimal compression strategy for data with binary labels. Second, for data with multiple labels, we formulate a set of convex optimization problems to characterize the optimal tradeoff, from which the optimal trade-off between the classification error and compression efficiency can be obtained by numerically solving the formulated optimization problems. We further show the improvements of our formulations over the information-bottleneck methods in classification performance.   

\end{abstract}
\begin{keywords}
classification, error probability, compression, mutual information, rate-distortion theory
\end{keywords}
\section{Introduction}
\label{sec:intro}
 Machine learning plays an important role in science and engineering.  Among machine learning tasks, classification is an important one which has many applications in communication and signal processing, for example, image recognition. 

 Machine learning needs sensor data to make inference or to perform classification \citep{murphy2013machine,bishop:2006:PRML}. These sensor data are first collected, and then stored in storage or transmitted through communication channels to classifiers. However, the capacities of storage or communication channel are often limited. Thus, there is often a need to compress sensing data for more efficient storage or transmission \citep{Calderbank09compressedlearning:,ZISSELMAN20183,cheng2021datasharing}. A fundamental question is hence how much compression one can achieve for sensing data such that machine learning tasks can still be executed with a certain given accuracy? In this paper, we propose a formulation of this problem, and try to answer this question for classification from an information-theoretic perspective.  
 
  In classification, we assume that labels (denoted by random variable $Y$) generate data (denoted by $X$) according to data generation distribution $\text{P}(X|Y)$. Data $X$ is fully  known to the data compressor.  The data compressor compresses $X$ into compressed data $\tilde{X}$.  The goal for the compressor is to compress $X$ as much as possible for efficient communication or storage while allowing the classification task to be performed still with a specified fidelity: namely the label $Y$ can still be sufficiently accurately recovered using only compressed data $\tilde{X}$. Towards this end, we propose to minimize the mutual information between  $X$ and $\tilde{X}$ while minimizing the error probability (or generalized costs associated with classification errors).
 
 In classical rate-distortion theory for lossy data compression, data compression is performed so that the mutual information between data $X$ and compressed data $\tilde{X}$ is minimized  under the constraint on a distortion criterion between $X$ and $\tilde{X}$ \citep{cover}. The distortion criterion in rate-distortion theory is often a direct distortion measure depending on the original data $X$ and the compressed data $\tilde{X}$. In contrast, in this paper, for the classification task, we are considering the distortion between the original label and the recovered label ($\hat{Y}$) for classification, rather than the direct distortion between $X$ and $\tilde{X}$.
 
  Our research problem is connected with the information bottleneck principle \citep{Tisbybottle}\citep{tishby2015learning}\citep{bottle}\citep{entropy}, which was proposed to study data compression under the constraint of preserving classification labels to a certain fidelity. The information bottleneck principle uses the mutual information between label ($Y$) and compressed data ($\tilde{X}$) as a simple proxy for the fidelity in preserving the label information.  However, mutual information may not be an accurate indicator of the distortion between the recovered label $\hat{Y}$ and the original label $Y$ in the classification task. This is especially true if the distortion in classification is asymmetric: the distortion for mis-classifying an object with label ``$a$'' to label ``$b$'' is weighted  higher than mis-classifying an object with label ``$b$'' to label ``$a$''.  In addition, \citet{raginskyDBLP:journals/corr/abs-0901-1905,raginsky6353589} looked at rate-limited communication of training data in machine learning and derived performance limits of constructed predictors based on such rate-limited communication.

 In this paper, we directly consider more relevant metrics for characterizing classification performances in determining optimal compression of sensing data. In particular, we study the problem of minimizing the mutual information between data and compressed data under constraints on classification error probability (or or generalized costs associated with classification errors), which are widely used performance metric for evaluating a classifier. 
  The rest of this paper is organized as follows. In Section \ref{sec:format}, we formulate the problem of optimally compressing data under classification error probability constraints. In Section \ref{sec:binary}, we analytically characterize the optimal compression strategy for binary symmetric channel connecting label and sensing data. In Section \ref{sec:general}, we propose a general optimization framework to calculate the optimal compression and resulting minimum classification error probability. In Section \ref{sec:numerical}, we present numerical results showing the optimal trade-off between data compression and classification error probability. 
 
\section{Model Formulation}
\label{sec:format}
Suppose that we have $m$ labels in the label set $\mathcal{Y}$, which is $\{y_{1}, y_{2},\ldots, y_{m}\}$. We let the prior probability for the labels be $\text{P}(y_{i})$, $i=1, 2, \ldots, m$. 
Then the label ($Y$) will generate data, and we denote the set of possible data as $\mathcal{X}$. We assume that  $\mathcal{X}$ has $n$ elements, and its elements are $x_{1}, x_{2}, \ldots, x_{n}$. We denote the transition probability between each label and any possible data as $\text{P}(x_{j}|y_{i})$, where $i=1, 2, \ldots, m;$ and $j=1, 2, \ldots, n$. For efficient storage and communication, we want to compress data $X$ to compressed data $\tilde{X}$, which are sampled from set $\tilde{\mathcal{X}}$ of cardinality $l$. To be exact, $\tilde{\mathcal{X}}$ includes $\tilde{x}_{1}, \tilde{x}_{2}, \ldots, \tilde{x}_{l}$ as its elements.  Furthermore, we define that the transition probability between each data $X$ and its compressed data $\tilde{X}$ as $\text{P}(\tilde{x}_{k}|x_{j})$, where $j=1, 2, \ldots, n,$ and $k=1, 2, \ldots, l.$
\begin{figure}
\begin{center}
  \includegraphics[width=10cm]{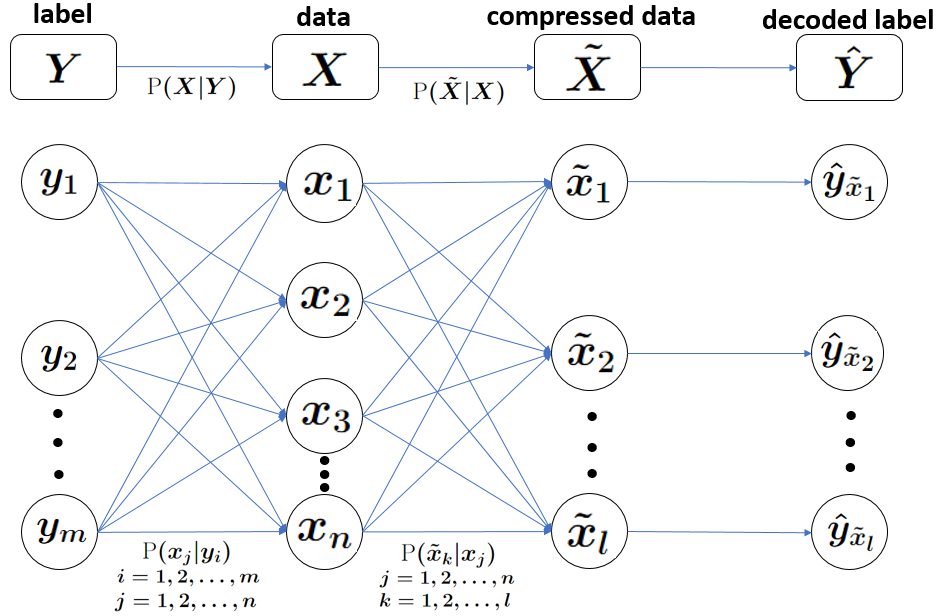}
  \caption{Transition probabilities between labels, data, and compressed data}
  \end{center}
\end{figure}
We assume that the decoder or machine learning algorithms use the maximum a posteriori (MAP) decoder (or the minimum-cost decoder when general costs associated with decoding errors are considered) to decode compressed data $\tilde{x_k}$ to label $\hat{y}_{\tilde{x}_k}$, where $1\leq k \leq l$. The job of the compressor is to design the transition probabilities $\text{P}(\tilde{x}_{k}|x_{j})$'s such that the mutual information $I(X, \tilde{X})$ is minimized for most efficient compression, while keeping the decoding error probability (the probability that the decoded label is not equal to the original label) smaller than a certain threshold. 

\section{Optimal Compression for Binary Symmetric Channel: Analytical Results}
\label{sec:binary}
While it is difficult to obtain analytical solutions to the proposed problem in general, we are able to analytically derive analytical optimal compression strategies for binary labels and data. We consider the case of binary labels and we assume that there are also two elements in the alphabet for data and the alphabet for compressed data. We assume that $\text{P}(Y=0)=\frac{1}{2}$, and $\text{P}(Y=1)=\frac{1}{2}$. We try to minimize the mutual information between $X$ and $\tilde{X}$ (subject to MAP decoding error threshold constraints) over the following transition probabilities $p_1$, $p_2$ and $p_3$: $\text{P}(X=1|Y=0)=\text{P}(X=0|Y=1)=p_{1}$, $\text{P}(\tilde{X}=1|X=0)=p_{2}$, and $\text{P}(\tilde{X}=0|X=1)=p_{3}$.\\
\begin{figure}
\begin{center}
  \includegraphics[width=10cm]{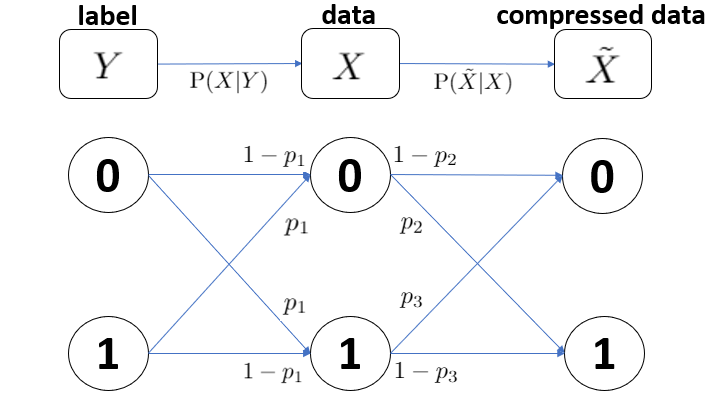}
\caption{Transition probabilities for binary data.}
\end{center}
\end{figure}
\begin{theorem}
For binary data, where each label has equal probability, and with symmetric crossover transition probabilities that are less than $\frac{1}{2}$ between label and data, the optimal trade-off in terms of classification error probability and data compression is achieved by having symmetric transition probabilities between data and compressed data (namely $p_2=p_3 \leq \frac{1}{2}$). Then the smallest achievable mutual information between $X$ and $\tilde{X}$ is $I=1-p_{2}\log\frac{1}{p_{2}}-(1-p_{2})\log\frac{1}{1-p_{2}}$ corresponding to an error probability no bigger than $Pe=p_{1}+p_{2}-2p_{1}p_{2}$.  
\end{theorem}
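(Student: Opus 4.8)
The plan is to reduce the two-parameter optimization over $(p_2,p_3)$ to the one-parameter symmetric family by establishing two facts: the classification error depends on $(p_2,p_3)$ only through the sum $p_2+p_3$, while among all pairs with a fixed sum the symmetric choice minimizes the rate. First I would record the marginals. Since $Y$ is uniform and the $Y\to X$ crossover is symmetric with parameter $p_1$, the induced law of $X$ is uniform, $\text{P}(X=0)=\text{P}(X=1)=\tfrac12$, irrespective of the compression channel. Hence $\text{P}(\tilde X=1)=\tfrac12(1+p_2-p_3)$, and because the input law is fixed at uniform,
\[
I(X;\tilde X)=h\!\left(\tfrac{1+p_2-p_3}{2}\right)-\tfrac12 h(p_2)-\tfrac12 h(p_3),
\]
where $h(u)=-u\log u-(1-u)\log(1-u)$ is the binary entropy.

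Next I would compute the composite channel $Y\to\tilde X$. It is again a binary channel, with $a:=\text{P}(\tilde X=1\mid Y=0)$ and $b:=\text{P}(\tilde X=1\mid Y=1)$, and a direct expansion gives $b-a=(1-2p_1)(1-p_2-p_3)$. Since $p_1<\tfrac12$, whenever $s:=p_2+p_3\le 1$ we have $b\ge a$, so the MAP rule (equivalently, for a uniform prior, maximum likelihood) decodes $\hat Y=\tilde X$, and the error probability evaluates to
\[
P_e=\tfrac12\bigl[1-(1-2p_1)(1-s)\bigr]=p_1+\tfrac{s}{2}(1-2p_1).
\]
The crucial observation is that $P_e$ depends on $(p_2,p_3)$ only through $s$. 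I would also note that the restriction $s\le1$ is without loss of generality: relabeling the two compressed symbols sends $(p_2,p_3)\mapsto(1-p_2,1-p_3)$, hence $s\mapsto 2-s$, while leaving both $I(X;\tilde X)$ and the optimal decoding error unchanged.

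The heart of the argument is to show that, among all channels with a fixed sum $s$, the symmetric one $p_2=p_3=s/2$ minimizes the rate, which I would obtain from convexity of mutual information in the channel for a fixed input distribution. Let $W_1$ denote the channel $(p_2,p_3)$ and $W_2$ its reflection $(p_3,p_2)$. Using $h(u)=h(1-u)$ one checks that $I(X;\tilde X)$ takes the same value for $W_1$ and $W_2$, while their equal mixture $\tfrac12W_1+\tfrac12W_2$ is exactly the symmetric channel with crossover $s/2$. Convexity then yields
\[
I(X;\tilde X)_{\mathrm{sym}}\le\tfrac12 I(X;\tilde X)_{W_1}+\tfrac12 I(X;\tilde X)_{W_2}=I(X;\tilde X)_{W_1}.
\]
Since $W_1$, $W_2$ and the symmetric channel share the same $s$ and hence the same $P_e$, the symmetric channel attains the same error at no larger rate, so the symmetric family is Pareto-optimal. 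Substituting $p_2=p_3=p\le\tfrac12$ (so that $s=2p\le1$) into the two displayed formulas gives $I=1-h(p)$ and $P_e=p_1+p-2p_1p$, as claimed.

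I expect the main obstacle to be establishing \emph{global} rather than merely local optimality of the symmetric choice. A naive route — fixing $s$, parametrizing the asymmetry $t=(p_2-p_3)/2$, and applying a second-derivative test — only certifies $p_2=p_3$ as a local minimizer (the second derivative $4(1-s)^2/[s(2-s)]$ degenerates as $s\to1$), and does not rule out boundary behavior. The clean resolution is the convexity-in-the-channel argument above, whose success rests on two structural facts: the reflected channel $W_2$ has the same rate as $W_1$, and the symmetric channel is literally the midpoint of $W_1$ and $W_2$. A secondary point needing care is the justification of the MAP rule and the reduction to $s\le1$, which is precisely what decouples the error constraint from the asymmetry and makes the convexity step applicable.
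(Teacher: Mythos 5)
Your proposal is correct and follows essentially the same route as the paper: symmetrize $(p_2,p_3)$ to the average crossover $\tfrac{p_2+p_3}{2}$, observe that the MAP error probability depends only on $p_2+p_3$ and is therefore unchanged, and invoke convexity of $I(X;\tilde X)$ in the channel for fixed (uniform) $P_X$ to conclude the rate does not increase. Your write-up is in fact more explicit than the paper's on the one point it leaves implicit---that convexity must be combined with the reflection symmetry $I_{(p_2,p_3)}=I_{(p_3,p_2)}$ so that the symmetric channel is the midpoint of two equal-rate channels---and your relabeling reduction of the case $p_2+p_3>1$ replaces the paper's direct case-by-case computation of $Pe$.
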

\begin{proof}
In this proof, we show that if $p_2 \neq p_3$, we can always make the crossover probability symmetric and equal to the average of $p_2$ and $p_3$, without increasing $I(X, \tilde{X})$ and without increasing the MAP decoding error probability.

\begin{equation*}
        \begin{split}
            \text{P}(Y=1|\tilde{X}=0)&=\frac{\text{P}(Y=1,\tilde{X}=0)}{\text{P}(\tilde{X}=0)}\\
            &=\frac{p_{1}+p_{3}-p_{1}p_{2}-p_{1}p_{3}}{1-p_{2}+p_{3}},
        \end{split}
    \end{equation*}
    Similarly,
    \begin{equation*}
    \text{P}(Y=0|\tilde{X}=1)=\frac{p_{1}+p_{2}-p_{1}p_{3}-p_{1}p_{2}}{1+p_{2}-p_{3}}.
    \end{equation*}
    Then, $\text{P}(Y=0|\tilde{X}=0)=1-\text{P}(Y=1|\tilde{X}=0)$, $\text{P}(Y=1|\tilde{X}=1)=\text{P}(Y=0|\tilde{X}=1)$.
    Since $p_{1}<\frac{1}{2}$ and $p_{1}$ is fixed, if $p_{2}<1-p_{3}$, we have
    $\text{P}(Y=1|\tilde{X}=1)>\text{P}(Y=0|\tilde{X}=1)$ and $\text{P}(Y=0|\tilde{X}=0)>\text{P}(Y=1|\tilde{X}=0)$.
    This gives us
    \begin{equation*}
        \begin{split}
          Pe&=\frac{1}{2}\text{P}(\tilde{X}=0|Y=1)+\frac{1}{2}\text{P}(\tilde{X}=1|Y=0)\\
          &=p_{1}(1-p_{2}-p_{3})+\frac{p_{2}+p_{3}}{2}.
        \end{split}
    \end{equation*}
    Otherwise, if $p_{2}>1-p_{3}$, 
    similarly, it follows:
    \begin{equation*}
        Pe=1-p_{1}(1-p_{2}-p_{3})-\frac{p_{2}+p_{3}}{2}.
    \end{equation*}
    Next, we do the convex combination of $p_{2}$ and $p_{3}$, such that
    \begin{equation*}
    \begin{split}
    \text{P}(\tilde{X}=1|X=0)=\text{P}(\tilde{X}=0|X=1)=\frac{p_{2}+p_{3}}{2},\\
    \end{split}
    \end{equation*}
    Since $p_{1}<\frac{1}{2}$ and $p_{1}$ is fixed, by the same process as above, if $p_{2}<1-p_{3}$, we have 
    $Pe=\frac{1}{2}\text{P}(\tilde{X}=0|Y=1)+\frac{1}{2}\text{P}(\tilde{X}=1|Y=0)=p_{1}(1-p_{2}-p_{3})+\frac{p_{2}+p_{3}}{2}$.
   Otherwise, if $p_{2}>1-p_{3}$,
similarly, 
    $Pe=\frac{1}{2}\text{P}(\tilde{X}=0|Y=0)+\frac{1}{2}\text{P}(\tilde{X}=1|Y=1)=1-p_{1}(1-p_{2}-p_{3})-\frac{p_{2}+p_{3}}{2}$.

In conclusion, we notice that $Pe$ remains the same before and after doing convex combination. Since the mutual information is convex function of the transition probability between $X$ and $\tilde{X}$ for fixed $P(X)$ \citep{cover}, mutual information is not increased after doing convex combination while $Pe$ does not increase. This implies that the optimal transition probability should be symmetric.\\
Finally, with this conclusion, we can focus on a symmetric crossover probability $p_2$, namely, 
    $\text{P}(\tilde{X}=1|X=0)=\text{P}(\tilde{X}=0|X=1)=p_{2}$. 
Then, 
    \begin{equation*}
   \begin{split}
    \text{P}(Y=1|\tilde{X}=0)=\text{P}(Y=0|\tilde{X}=1)=p_{1}+p_{2}-2p_{1}p_{2},
    \end{split}   
    \end{equation*}
    Now suppose that $p_{1}<\frac{1}{2}$, and we notice that if we also have $p_{2}<\frac{1}{2}$, then, 
    $\text{P}(Y=0|\tilde{X}=0)>\text{P}(Y=1|\tilde{X}=0)$ and
    $\text{P}(Y=1|\tilde{X}=1)>\text{P}(Y=0|\tilde{X}=1)$.
    This suggests that $Pe=p_{1}+p_{2}-2p_{1}p_{2}$ and the mutual information is given by $1-p_{2}\log\frac{1}{p_{2}}-(1-p_{2})\log\frac{1}{1-p_{2}}$.
\end{proof}
\textbf{Remarks}: Our proof is different from showing that symmetric transition probabilities achieve optimal rate-distortion tradeoff involving  $I (X,\tilde{X})$ and binary distortion between $X$ and $\tilde{X}$. Here we consider the decoding error probability for label $Y$, making our proof arguably more involved.  

\section{Optimization Formulation for Computing Optimal Compression}
\label{sec:general}

Suppose that we have $m$ labels in the label set $\mathcal{Y}$, and we denote them by $y_{1}, y_{2}, \ldots, y_{m}$.  We denote the prior probability for each label as $\text{P}(y_{i})$, $i=1, 2, \ldots, m$. Then these labels generate data sampled from set $\mathcal{X}$ of cardinality $n$. Specifically, the elements in $\mathcal{X}$ are $x_{1}, x_{2}, \ldots, x_{n}$. We denote the transition probability between each label and possible element for data as $\text{P}(x_{j}|y_{i})$, where $i=1, 2, \ldots, m;$ $j=1, 2, \ldots, n$. We want to map (compress) the data to $l$ possible letters in the compressed data set $\tilde{\mathcal{X}}$ of cardinality $l$, which includes $\tilde{x}_{1}, \tilde{x}_{2}, \ldots, \tilde{x}_{l}$ as its elements. Furthermore, we define the transition probability between $x_j$ and compressed data $\tilde{x}_k$ as $\text{P}(\tilde{x}_{k}|x_{j})$, where $j=1, 2, \ldots, n;$ $k=1, 2, \ldots, l.$

Our goal is to minimize the mutual information between $X$ and $\tilde{X}$ by optimizing over the transition probabilities P$(\Tilde x_k|x_{j})$, subject to the constraint that the classification error probability is smaller than a certain threshold $\epsilon$. However, this optimization problem is a non-convex optimization problem. We propose to obtain global optimal solution by dividing this optimization problem into multiple convex optimization problems, based on different MAP decoding rules. 

 We assume that for a given letter $\tilde{x}_{k}$, the MAP rule decodes it to label $\hat{y}_{\tilde{x}_{k}}$, which is from the set ${\mathcal{Y}}$. 
 We notice that there are $m^{l}$ possible MAP maps from $\mathcal{\tilde{X}}$ to $\mathcal{Y}$.  For each MAP decoding rule, we are trying to minimize the mutual information between $X$ and $\tilde{X}$.  
So for a particular MAP decoding rule, minimizing $I(X;\tilde{X})$ is equivalent to the following convex programming:
\begin{equation*}
\begin{split}
\min_{\text{P}(\tilde{x}_{k}|x_{j})}&{I(X;\tilde{X})}\\
\text{subject to} \quad& Pe=\sum_{k=1}^{l}\sum_{y_{i}\neq\hat{y}_{\tilde{x}_{k}}}\sum_{j=1}^{n}{\text{P}(y_{i})\text{P}(x_{j}|y_{i})\text{P}(\tilde{x}_{k}|x_{j})}\leq \epsilon,\quad \\
&\text{P}(\tilde{x}_{k}|x_{j})\geq 0,\quad \forall x_{j}\in\mathcal{X}, \tilde{x}_{k}\in\mathcal{\tilde{X}}  \\
&\sum_{k=1}^{l}{\text{P}(\tilde{x}_{k}|x_{j})=1}, \quad\forall x_{j}\in\mathcal{X}\\
& \sum_{j=1}^{n}{\text{P}(y_{i})\text{P}(x_{j}|y_{i})\text{P}(\tilde{x}_{k}|x_{j})}\\
&\leq \sum_{j=1}^{n}{\text{P}(\hat{y}_{\tilde{x}_{k}})   \text{P}(x_{j}|  \hat{y}_{\tilde{x}_{k}}   )\text{P}(\tilde{x}_{k}|x_{j})}                 \quad \forall k, \forall y_{i}\neq\hat{y}_{\tilde{x}_{k}}
\end{split}
\end{equation*}
where $\epsilon$ is the given error probability tolerance threshold. \emph{We have proved that the minimum objective value among these $m^l$ such convex optimization problems give the globally optimal compression under a constraint on error probability.} This formulation also extends to asymmetrical cost for decoding error.\\

 
\section{NUMERICAL RESULTS}
\label{sec:numerical}
In this section, we present numerical results for characterizing the optimal tradeoff between compression and classification accuracy.\\ 
In Figure 3, we calculate the curve of the allowed mutual information between data ($X$) and compressed data, against the classification error probability for the binary data under the parameters $p_{1}=0.3$. The plot is generated by using the result in Theorem 1. From the plotted curve, we can see that, when the mutual information between data $X$ and compressed data $\tilde{X}$ is allowed to be large, the classification error probability can be reduced, but at the expense of compression efficiency.\\
\begin{figure}
\begin{center}
\includegraphics[width=10cm]{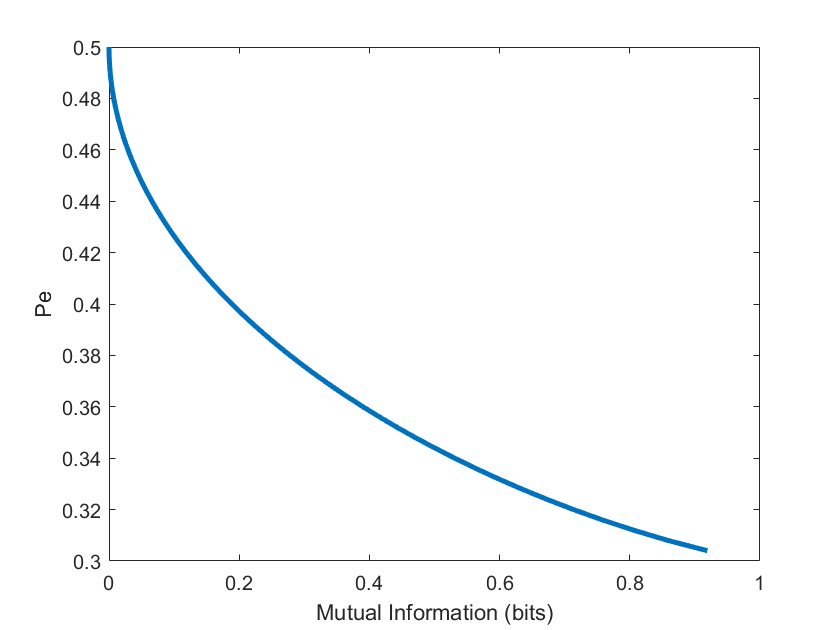}
  \caption{Mutual information between data and compressed data against classification error probability for $p_{1}=0.3.$}
\end{center}
\end{figure}\\
We further consider the case where the costs of decoding to incorrect labels are asymmetrical. In Figure 4, 
we plot the optimal classification cost and data compression trade-off, for a classification task with 3 labels, 4 data letters and 3 compressed data letters, with transition probabilities in the first channel and costs of incorrectly decoding from each label to decoded label shown as follows (the prior probability for each label is $1/3$). Note that when the $c=1$, the cost is equivalent to the decoding error probability. \\
\begin{center}
\begin{tabular}{ | c | c | c | c |} 
  \hline
  $\text{P}(x_{j}|y_{i}) $ & $y_{1}$ & $y_{2}$ & $y_{3}$ \\ 
  \hline
  $x_{1}$ & 0.995 & 0.001 & 0.002 \\ 
  \hline
  $x_{2}$ & 0.001 & 0.996 & 0.002 \\ 
  \hline
  $x_{3}$ & 0.002 & 0.001 & 0.994 \\
  \hline
  $x_{4}$ & 0.002 & 0.002 & 0.002 \\
  \hline
\end{tabular}
\end{center}
\begin{center}
\begin{tabular}{ | c | c | c | c |} 
  \hline
cost & $\hat{y}=y_1$ & $\hat{y}=y_2$ & $\hat{y}=y_3$ \\ 
  \hline
  $y_{1}$ & 0 & c & c \\ 
  \hline
  $y_{2}$ & 1 & 0 & 1 \\ 
  \hline
  $y_{3}$ & 1 & 1 & 0 \\
  \hline
\end{tabular}
\end{center}
\begin{figure}
\begin{center}
\includegraphics[width=10cm]{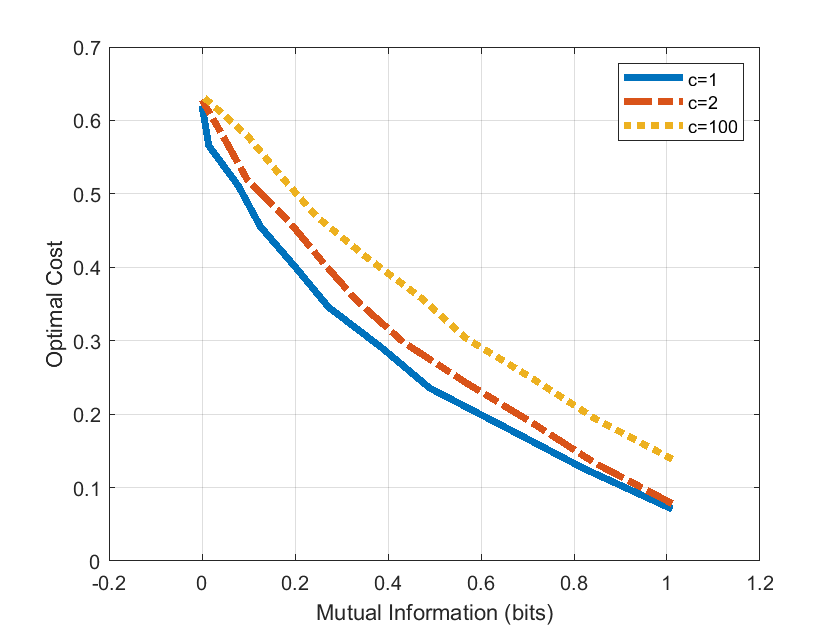}
  \caption{Optimal cost against mutual information between data and compressed data for different $c$.}
\end{center}
\end{figure}
Next, we consider the case with 3 labels, 3 data letters and 2 compressed data letters where costs of incorrectly decoding from each label to decoded label and transition probabilities between label and data are shown in the following tables.(the prior probability for each label is $1/4$, $1/4$ and $1/2$)
\begin{center}
\begin{tabular}{ | c | c | c | c |} 
  \hline
  $\text{P}(x_{j}|y_{i}) $ & $y_{1}$ & $y_{2}$ & $y_{3}$ \\ 
  \hline
  $x_{1}$ & 0.9 & 0.1 & 0.05 \\ 
  \hline
  $x_{2}$ & 0.1 & 0.9 & 0.05 \\ 
  \hline
  $x_{3}$ & 0 & 0 & 0.9 \\
  \hline
\end{tabular}
\end{center}
\begin{center}
\begin{tabular}{ | c | c | c | c |} 
  \hline
   cost & $\hat{y}=y_1$ & $\hat{y}=y_2$ & $\hat{y}=y_3$ \\ 
  \hline
  $y_{1}$ & 0 & 1 & 1 \\ 
  \hline
  $y_{2}$ & 1 & 0 & 1 \\ 
  \hline
  $y_{3}$ & 0.0001 & 0.0001 & 0 \\
  \hline
\end{tabular}
\end{center}
\begin{figure}
\begin{center}
\includegraphics[width=10cm]{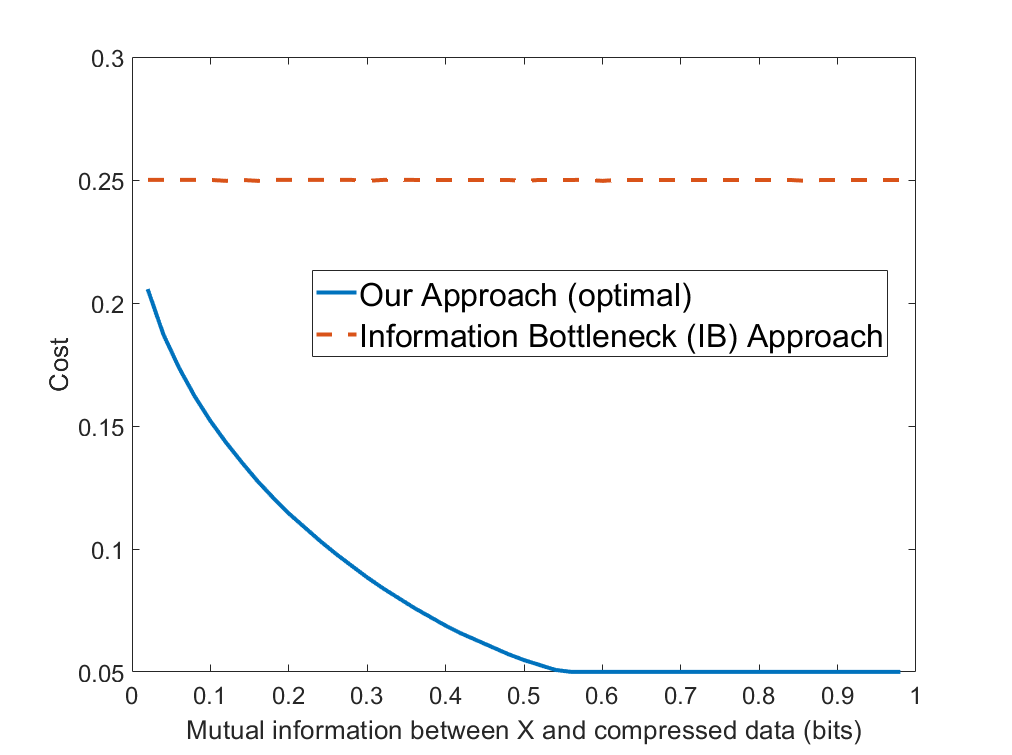}
  \caption{Optimal cost and compression tradeoff for our approach, and comparison with the performance of information bottleneck approach.}
\end{center}
\end{figure}
In Figure 5, compared with Information Bottleneck Principle (IBP, which directly maximizes mutual information between label and compressed data), we get a curve of decoding cost against the mutual information between data ($X$) and compressed data ($\tilde{X}$). As we can see, our newly proposed approach can significantly outperform the IBP approach in achieving minimum decoding cost and highest compression efficiency. The reason is that the information bottleneck approach was not optimized for minimizing the cost.

\vfill\pagebreak


\bibliography{refs}

\begin{thebibliography}{12}
\providecommand{\natexlab}[1]{#1}
\providecommand{\url}[1]{\texttt{#1}}
\expandafter\ifx\csname urlstyle\endcsname\relax
  \providecommand{\doi}[1]{doi: #1}\else
  \providecommand{\doi}{doi: \begingroup \urlstyle{rm}\Url}\fi

\bibitem[Bardera et~al.(2009)Bardera, Rigau, Boada, Feixas, and Sbert]{bottle}
Anton Bardera, Jaume Rigau, Imma Boada, Miquel Feixas, and Mateu Sbert.
\newblock Image segmentation using information bottleneck method.
\newblock \emph{IEEE Transactions on Image Processing}, 18\penalty0
  (7):\penalty0 1601--1612, 2009.
\newblock \doi{10.1109/TIP.2009.2017823}.

\bibitem[Bishop(2006)]{bishop:2006:PRML}
Christopher~M. Bishop.
\newblock \emph{Pattern Recognition and Machine Learning}.
\newblock Springer, 2006.

\bibitem[Calderbank et~al.(2009)Calderbank, Jafarpour, and
  Schapire]{Calderbank09compressedlearning:}
Robert Calderbank, Sina Jafarpour, and Robert Schapire.
\newblock Compressed learning: Universal sparse dimensionality reduction and
  learning in the measurement domain.
\newblock Technical report, 2009.

\bibitem[Cheng et~al.()Cheng, Pavone, Katti, Chinchali, and
  Tang]{cheng2021datasharing}
Jiangnan Cheng, Marco Pavone, Sachin Katti, Sandeep Chinchali, and Ao~Tang.
\newblock Data sharing and compression for cooperative networked control.
\newblock \emph{accepted to NeurIPS 2021}.

\bibitem[Cover and Thomas(2006)]{cover}
Thomas~M. Cover and Joy~A. Thomas.
\newblock \emph{Elements of Information Theory (Wiley Series in
  Telecommunications and Signal Processing)}.
\newblock Wiley-Interscience, USA, 2006.
\newblock ISBN 0471241954.

\bibitem[Geiger and Kubin(2020)]{entropy}
Bernhard Geiger and Gernot Kubin.
\newblock Information bottleneck: Theory and applications in deep learning.
\newblock \emph{Entropy}, 22:\penalty0 1408, 12 2020.
\newblock \doi{10.3390/e22121408}.

\bibitem[Murphy(2013)]{murphy2013machine}
Kevin~P. Murphy.
\newblock \emph{Machine learning : a probabilistic perspective}.
\newblock MIT Press, Cambridge, Mass. [u.a.], 2013.
\newblock ISBN 9780262018029 0262018020.
\newblock URL
  \url{https://www.amazon.com/Machine-Learning-Probabilistic-Perspective-Computation/dp/0262018020/ref=sr_1_2?ie=UTF8&qid=1336857747&sr=8-2}.

\bibitem[Raginsky(2009)]{raginskyDBLP:journals/corr/abs-0901-1905}
Maxim Raginsky.
\newblock Achievability results for statistical learning under communication
  constraints.
\newblock \emph{CoRR}, abs/0901.1905, 2009.
\newblock URL \url{http://arxiv.org/abs/0901.1905}.

\bibitem[Raginsky(2013)]{raginsky6353589}
Maxim Raginsky.
\newblock Empirical processes, typical sequences, and coordinated actions in
  standard borel spaces.
\newblock \emph{IEEE Transactions on Information Theory}, 59\penalty0
  (3):\penalty0 1288--1301, 2013.
\newblock \doi{10.1109/TIT.2012.2227669}.

\bibitem[Tishby and Zaslavsky(2015)]{tishby2015learning}
Naftali Tishby and Noga Zaslavsky.
\newblock Deep learning and the information bottleneck principle.
\newblock \emph{CoRR}, abs/1503.02406, 2015.
\newblock URL
  \url{http://dblp.uni-trier.de/db/journals/corr/corr1503.html#TishbyZ15}.

\bibitem[Tishby et~al.(2001)Tishby, Pereira, and Bialek]{Tisbybottle}
Naftali Tishby, Fernando Pereira, and William Bialek.
\newblock The information bottleneck method.
\newblock \emph{Proceedings of the 37th Allerton Conference on Communication,
  Control and Computation}, 49, 07 2001.

\bibitem[Zisselman et~al.(2018)Zisselman, Adler, and Elad]{ZISSELMAN20183}
E.~Zisselman, A.~Adler, and M.~Elad.
\newblock Chapter 1 - compressed learning for image classification: A deep
  neural network approach.
\newblock In Ron Kimmel and Xue-Cheng Tai, editors, \emph{Processing, Analyzing
  and Learning of Images, Shapes, and Forms: Part 1}, volume~19 of
  \emph{Handbook of Numerical Analysis}, pages 3--17. Elsevier, 2018.
\newblock \doi{https://doi.org/10.1016/bs.hna.2018.08.002}.
\newblock URL
  \url{https://www.sciencedirect.com/science/article/pii/S1570865918300024}.

\end{thebibliography}
\end{document}